\title{Stochastic Continuum Models for High--Entropy Alloys with Short-range Order}
\author{Yahong Yang$^{a}$\thanks{E-mail address: \it\textbf{yyangct@connect.ust.hk}},
Luchan Zhang$^{b}$\thanks{E-mail address: \it\textbf{zhanglc@szu.edu.cn}},
Yang Xiang$^{a,c}$\thanks{E-mail address: \it\textbf{maxiang@ust.hk}}\\$^{a}$\small\textit{Department of Mathematics, Hong Kong University of Science and Technology,}\\ \small\textit{Clear Water Bay, Kowloon, Hong Kong}\\
$^{b}$\small\textit{College of Mathematics and Statistics, Shenzhen University,} \\ \small\textit{Shenzhen 518060, China}\\
$^{c}$\small\textit{HKUST Shenzhen-Hong Kong
Collaborative Innovation Research Institute,} \\ \small\textit{Futian, Shenzhen, China}}
\begin{document}
	
	\maketitle
	
	\begin{abstract}
High entropy alloys (HEAs) are a class of novel materials that exhibit superb engineering properties. It has been demonstrated by extensive experiments and first principles/atomistic simulations that short-range order in the atomic level randomness strongly influences the properties of HEAs. In this paper, we derive stochastic continuum models for HEAs with short-range order from atomistic models.
A proper continuum limit is obtained such that the mean and variance of the atomic level randomness together with the
short-range order described by a characteristic length are kept in the process from the atomistic interaction model to the continuum equation.
The obtained continuum model with short-range order is in the form of an Ornstein--Uhlenbeck (OU) process.   This validates the continuum  model based on the OU process adopted phenomenologically by Zhang et al. [Acta Mater., 166 (2019), pp. 424--434] for  HEAs with short-range order.
We derive such stochastic continuum models with short-range order for both  elasticity in HEAs without defects and  HEAs with dislocations (line defects). The obtained stochastic continuum models are based on the energy formulations, whose variations lead to stochastic partial differential equations.
	\end{abstract}
	
	\noindent {\it Keywords}: high-entropy alloys; short range order; continuum limit; Peierls--Nabarro  model; Ornstein--Uhlenbeck process.
	
	\section{Introduction}
	
	High entropy alloys (HEAs) are single phase crystals with random solid solutions of five or more elements of nearly equal composition \cite{yeh2004nanostructured,cantor2004microstructural,zhang2014microstructures,miracle2017critical}.
It is widely believed that HEAs have many ideal engineering properties, such as high strength, high temperature stability,  high fracture resistance, etc. Therefore, HEAs have attracted considerable research interest in the development of advance materials
\cite{yeh2004nanostructured,cantor2004microstructural,senkov2011microstructure,del2012modeling,zhang2014microstructures,toda2015modelling,singh2015atomic,
toda2015interatomic,tamm2015atomic,sharma2016atomistic,varvenne2016theory,varvenne2017solute,miracle2017critical,feng2017effects,
fernandez2017short,ZhangFX2017,zhang2019effect,ikeda2019ab,George2020,zhang2020short,WangYZ2020,ZhangYongwei2021,wu2021short}.
 Although HEAs and their applications have been widely investigated in the materials science area, mathematical understandings and rigorous developments of models to describe HEAs are still limited.

The strength of HEAs, as in the traditional crystalline materials, is associated with the motion of dislocations (line defects) driven by the stress.
There are HEA models based on independent randomness of the element types at individual atomic sites, i.e., without short-range order. Models for the strength of HEAs \cite{senkov2011microstructure,toda2015modelling,varvenne2016theory,varvenne2017solute,George2020} have been developed that generalize the solute solution strengthening model for traditional alloys \cite{labusch1970statistical}. In the models proposed in  Ref.~\cite{varvenne2016theory,varvenne2017solute,George2020}, dislocations interact with the HEA lattice through the long-range elastic field, and the elastic field in the HEA is modeled by considering each lattice site as a point defect with random perturbation in its size.  Strength of HEAs influenced by the dislocation core effect in considered in the continuum model in Ref.~\cite{zhang2019effect} by a stochastic generalization of the Peierls--Nabarro model \cite{peierls1940size,nabarro1947dislocations}.
Recently, Jiang {\it et al.}~\cite{jiang2020stochastic} presented a mathematical derivation of the stochastic continuum model proposed phenomenologically in Ref.~\cite{zhang2019effect} from an atomistic model for dislocations in bilayer HEAs, by using asymptotic analysis and limit theorems; short-range order was not considered in this derivation.


At finite temperature, it has been shown by experiments and atomistic simulations that the distributions of elements are commonly not completely random in HEAs:  the element type at an atomic site will enhance or reduce the probability of element types around it, i.e., the correlation between the element types at two close atomic sites is not $0$. This is the short-range order in HEAs. The Warren-Cowley pair-correlation parameters \cite{cowley1995diffraction} is one of the widely used classical methods to describe short-range order in
muliti-component systems including HEAs \cite{singh2015atomic,tamm2015atomic,sharma2016atomistic,feng2017effects,fernandez2017short,ZhangYongwei2021}:
    	$\alpha_{e_ie_j}(r)=1-\frac{P_{e_ie_j}(r)}{p_j}$,
    where $P_{e_ie_j}(r)$ is the probability of finding an atom of type $e_j$ at site $j$ given an atom of type $e_i$ at site $i$,   $p_j$ is the probability of element $e_j$ at site $j$, and $r$ is the distance between the atomic sites $i$ and $j$.
It has been shown by first principles calculations and atomistic simulations \cite{del2012modeling,singh2015atomic,toda2015interatomic,tamm2015atomic,sharma2016atomistic,feng2017effects,fernandez2017short,
ikeda2019ab,ZhangYongwei2021} and experiments \cite{ZhangFX2017,zhang2020short,wu2021short} that short-range order strongly influences the properties of HEAs.

Despite the active research on  HEAs with short-range order as reviewed above, almost all the models for the short-range order in HEAs are atomistic models or first principles calculations on even smaller scales. The only available continuum model is the one proposed by Zhang {\it et al.} \cite{zhang2019effect}, in which the short-range order in HEAs is incorporated by the Ornstein--Uhlenbeck (OU) process \cite{uhlenbeck1930theory,doob1942brownian,wang1945theory,karatzas1998brownian} in the continuum stochastic Peierls-Nabarro model for dislocations in HEAs. This model predicts significant increase of the intrinsic strength of HEAs as the correlation length or the standard deviation of the randomness in the HEAs increases, which  is
consistent with experimental measurements of  the yield strength of  HEAs \cite{senkov2011microstructure}. This model was proposed phenomenologically, and no derivation from atomistic model is available for  continuum level description of the short-range order in HEAs.

In this paper, we derive stochastic continuum models for HEAs with short-range order from atomistic models.
Unlike the derivation presented in \cite{jiang2020stochastic} for the continuum model of HEAs without short-range order, the challenge here is how to define and obtain the continuum limit of the atomic-level randomness with short-range correlations.
For this purpose, we first identify a characteristic length $H$ of the short-range order in the atomistic model. Under the assumptions of fast decaying nature of the short-range order and that the characteristic length $H$ of the short-range order is much larger than the lattice  constant but is much smaller than the length scale of the continuum model, we obtain the continuum limit from the atomistic interactions with short-range order. A proper continuum limit is defined such that the short-range order is kept in the process from the atomistic model to the continuum equation. The obtained continuum model with short-range order is in the form of an OU process, which validates the HEA model adopted phenomenologically in \cite{zhang2019effect}. We derive such stochastic continuum models with short-range order for both (i) the elastic deformation in HEAs without defects and (ii) HEAs with dislocations (in the form of the Peierls-Nabarro model). The obtained stochastic continuum models are based on the energy formulation. We briefly discuss the variational formulation of these obtained stochastic energies at the end of this paper.

	\section{Stochastic Elasticity Model for HEAs with Short Range Order}\label{elas}

	\subsection{Atomistic model of one row of atoms without defects}\label{subsec:atom_single}
	In this subsection, we establish the atomistic model of HEAs in one row without defects. In an HEA, each atom site is randomly occupied by one of the main elements. Denote the set of all these elements by $\Omega$:
	\begin{equation}
		\Omega=\{e_1,~e_2,\cdots,~e_m\}
	\end{equation}
For each lattice site, a random variable $\omega$ is defined with sample space $\Omega$ and probability measure:
\begin{equation}
		\rmP(e_k):=p_k\ge 0, \ \ \ \sum_{i=1}^{m}p_i=1.\label{simple}
	\end{equation}
Here $p_k$ is the probability of element $e_k$ occupying the lattice site.
We denote  $\omega_j$	to be such a random variable at the $j$-th site in the HEA. The location of the $j$-th atom is denoted by $a_j$. See Fig.~\ref{one row} for an illustration.

	\begin{figure}[h!]
		\centering
		\includegraphics[scale=0.97]{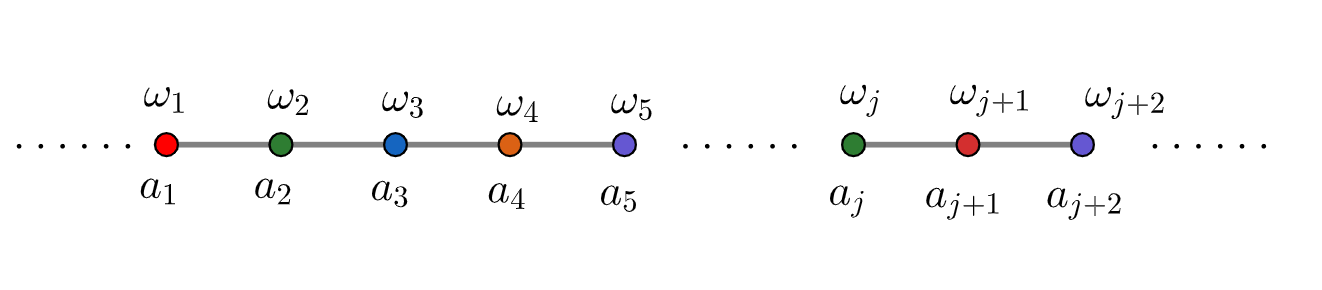}
		\caption{An HEA  of one row of atoms, where $a_j$ is the location  and $\omega_j$	is the random variable at the $j$-th atom site.}
		\label{one row}
	\end{figure}
	
	Consider an elastic displacement field $\{u_j\}_{j\in \sZ}\subset\sR$ on the HEA lattice $\{a_j\}_{j\in \sZ}$. Suppose that the HEA system is described by  pairwise potentials. Without loss of generality, we consider the nearest neighbor interaction. In the HEA system, the pairwise potentials depend on not only  the distance between the two atoms but also their elements. Therefore, the interaction energy between atoms $a_j$ and $a_{j+1}$ can be written as
	\begin{equation}
		V\left(h_j+u_{j+1}-u_j,\omega_j,\omega_{j+1}\right),
	\end{equation}
	where $h_j:=h(\omega_j,\omega_{j+1})$ is the random lattice constant which is the solution of \begin{equation}
		\frac{\,\D V(r,\omega_j,\omega_{j+1})}{\,\D r}\Big|_{r=h_j}=0.\label{random lattice constant}
	\end{equation}
Here $r$ is the distance between these two atoms.
    Hence the total energy of the HEA using the atomistic model is
    \begin{equation}
    	E_\text{a-el}=\sum_{j\in \sZ}\left[V\left(h_j+u_{j+1}-u_j,\omega_j,\omega_{j+1}\right)-V\left(h_j,\omega_j,\omega_{j+1}\right)\right].\label{total atom energy}
    \end{equation}

This elastic energy has the following approximate formula:
   \begin{align}
   	E_\text{a-el}=&\sum_{j\in \sZ}\left[V\left(h_j+u_{j+1}-u_j,\omega_j,\omega_{j+1}\right)-V\left(h_j,\omega_j,\omega_{j+1}\right)\right]\notag\\
   \approx&\frac{1}{2}\sum_{j\in \sZ}V''_jh_j^2\left(\frac{u_{j+1}-u_{j}}{h_j}\right)^2\notag\\
   =& \frac{1}{2}\sum_{j\in \sZ} \beta_j \left(\frac{u_{j+1}-u_{j}}{h_j}\right)^2h_j,\label{atom_constant}
   \end{align}
   where
   \begin{flalign}
   \beta_j:=&V''_jh_j,\label{atom_constant_beta}\\
   	V''_j:=&\frac{\,\D^2 V(r,\omega_j,\omega_{j+1})}{\,\D r^2}\Big|_{r=h_j}.\label{atom_constant_beta1}
   \end{flalign}
Here $\beta_j$ can be considered as the elastic modulus on the atomic scale. In fact, from Eq.~\eqref{atom_constant}, we have 	$E_\text{a-el}\approx \int_{x\in\sR}\frac{1}{2}\beta_\text{atom}(x)\left(\frac{\,\D u}{\,\D x}\right)^2\,\D x$, where
$\beta_\text{atom}(x)\approx\beta_j$ for $x\in [a_j,a_{j+1})$.

\begin{rmk}\label{review elasticz}
  Note that in the classical elasticity theory in one-dimension in which there is no randomness, the elastic energy associated with the displacement $u$ is
  \begin{equation}
    	\int_{\sR}\frac{1}{2}{\beta}\left(\frac{\,\D u}{\,\D x}\right)^2\,\D x,
    \end{equation}
  where ${\beta}$ is the elastic modulus.

   This elastic energy can be formally obtained by the corresponding deterministic atomistic model with pair potential $V$ as:
    \begin{equation}
    	\sum_{j\in \sZ}\left[V\left(h+u_{j+1}-u_j\right)-V\left(h\right)\right]\approx \int_{\sR}\frac{1}{2}V''(h)h\left(\frac{\,\D u}{\,\D x}\right)^2\,\D x,
    \end{equation}
where $h$ is the lattice constant.  The elastic modulus is $\beta=V''(h)h$.

In this case, the equilibrium equation without body forces is
  \begin{equation}
    \beta\frac{\,\D^2 u}{\,\D x^2}=0.
  \end{equation}

  \end{rmk}

 \subsection{Assumptions for short-range order in atomistic model of HEAs and limit theorems}\label{assum elas}
   In this subsections, we present the atomistic model and assumptions for the short-range order  in HEAs.
We employ the $\alpha$-mixing coefficients $\alpha_n$ \cite{billingsley2008probability}  to describe the short-range order in HEAs.
For the random variable sequence $\{X_j\}_{j\in\sZ}$,
 the $\alpha$-mixing coefficient $\alpha_n$ is defined as \cite{billingsley2008probability}
 \begin{equation}
   			\alpha_{n}=\sup \left\{|P(A \cap B)-P(A) P(B)|:  A \in \mathcal{F}_{-\infty}^{k}, B \in \mathcal{F}_{k+n}^{+\infty}, \ \forall k\in \sZ\right\}\label{beta}
   		\end{equation}
   where $\mathcal{F}_{a}^{b}$ is the $\sigma$-field generated by  $\{X_a,~X_{a+1},~\cdots,~X_b\}$.

    Recall that the method of Warren-Cowley pair-correlation parameters \cite{cowley1995diffraction} is one of the widely used classical methods to describe short-range order in
muliti-component systems including HEAs (e.g., \cite{singh2015atomic,tamm2015atomic,sharma2016atomistic,feng2017effects,fernandez2017short,ZhangYongwei2021}):
   	$\alpha_{e_ie_j}(n):=1-\frac{P_{e_ie_j}(n)}{p_j}$,
    where $P_{e_ie_j}(n)$ is the probability of finding an atom of type $e_j$ at $a_{n+r}$ given an atom of type $e_i$ at $a_{r}$, and $p_j$ is the probability of element $e_j$ occupying the lattice site defined in Eq.~\eqref{simple}.
    The	$\alpha$-mixing coefficients $\alpha_n$  of $\{\omega_j\}_{j\in\sZ}$ in HEAs are stronger than the pair-correlation parameters 	$\alpha_{e_ie_j}(n)$
     since correlations between groups of atoms are also considered in the definition of the	$\alpha$-mixing coefficients:
   	\begin{align*}\alpha_{n}=&\sup \left\{|P(A \cap B)-P(A) P(B)|: A \in \mathcal{F}_{-\infty}^{k}, B \in \mathcal{F}_{k+n}^{+\infty}, \ \forall k\in \sZ\right\}\notag\\
   \ge&\sup_k \big\{|\rmP(\omega_{k}=e_i,\omega_{k+n}=e_j)-\rmP(\omega_{k}=e_i)\rmP(\omega_{k+n}=e_j)|\big\}, \  e_i, e_j\in\Omega \notag\\
   =& p_ip_j|\alpha_{e_ie_j}(n)|\\
   \ge &\lambda|\alpha_{e_ie_j}(n)|,
   	\end{align*}
   	where $\lambda:=\min\{p_i^2\}$.

  In this paper, we consider a short range order in HEAs that is rapidly decaying with atomic distance as in the following assumption:
   \begin{assump}\label{assump1}
   	There is a constant number $N_s$ and a constant $C$, such that the	$\alpha$-mixing coefficients $\alpha_n$  of $\{\omega_j\}_{j\in\sZ}$ in HEAs satisfies
   \begin{equation}
   		\alpha_n\le C{n}^{-5} \ {\rm for} \  n\leq N_s, \ {\rm and}\  \alpha_n=0 \ {\rm for} \ n>N_s.\label{eqn:decayrate}
   	\end{equation}
   This means that $\omega_j$ and $\omega_{j+n}$ are independent if $n>N_s$.
   \end{assump}

 The rapidly decaying short-range order described in Assumption \ref{assump1} is consistent with the property of short-range order in alloys and HEAs. In Refs.~\cite{cowley1950approximate,sethna2006statistical}, they showed that short-range order exists in alloys when the temperature  is greater than the critical temperature, and it decays quickly with the atomic distance. In Refs.~\cite{feng2017effects,fernandez2017short,tamm2015atomic}, they only considered the first and second nearest-neighbor shell short range orders ($\alpha_{e_ie_j}(n)$ for $n=1,2$) in HEAs based on the fact that the correlation parameter $\alpha_{e_ie_j}(n)\to 0$ quickly as $n$ increases and the first and second nearest-neighbor shell short range orders play  dominant role in the correlation effect.
Moreover, the rapid decay of the $\alpha$-mixing coefficients $\alpha_n$ implies rapid decay of the correlation. This is can be proved by a lemma in Ref.~\cite{billingsley2008probability} (Lemma 2 on Page 365) that the correlation is bounded by the $\alpha$-mixing coefficient.
	


We will use the following generalized central limit theorem in our derivation.  Note that it holds for $\{\omega_j\}_{j\in\sZ}$ due to the Assumption \ref{assump1}.

   {\bf Theorem}~\cite[Theorem 27.4]{billingsley2008probability}
   \emph{Suppose that random variables $X_{1}, X_{2}, \cdots$ are stationary with $\alpha$-mixing coefficient $\alpha_{n}=O\left(n^{-5}\right),$ and $\rmE\left(X_{n}\right)=0$, $\rmE\left[X_{n}^{12}\right]<\infty$. Let $S_{n}=X_{1}+$
   	$\cdots+X_{n}$ and $\sigma^{2}={\displaystyle \lim _{n \rightarrow \infty} \rmE\left[S_{n}^{2}\right]/n}$, where $\sigma$ is positive. Then
   	\begin{equation}
   	\frac{S_{n}}{\sigma \sqrt{n}} \stackrel{d}\longrightarrow \mathcal{N}(0,1), \quad \text { as } n \rightarrow \infty,
   	\end{equation}
   	where $\mathcal{N}(0,1)$ is the standard Gaussian distribution, and $\stackrel{d}\longrightarrow$ is the convergence in distribution.
}
%

As described in the previous subsection, we consider the nearest neighbor interaction in this paper. The following lemma is able to give the relationship between the  $\alpha$-mixing coefficient of  $\{\omega_j\}$ and the  $\alpha$-mixing coefficient of the pairwise interaction energies associated with $\{(\omega_j,\omega_{j+1})\}$.
\begin{lem}\label{barH}
 Let $\alpha_n(\mathbf X)$  be the $\alpha$-mixing coefficient  of the random variable sequence  $\{X_j\}_{j\in\sZ}$. If for another random variable sequence  $\{Y_j\}_{j\in\sZ}$, each $Y_j$ is a Borel measurable function of $X_j$ and $X_{j+1}$, i.e., $Y_j=f_j(X_j,X_{j+1})$ for some Borel measurable $f_j$,
 and let $\alpha_n(\mathbf Y)$  be the $\alpha$-mixing coefficient  of the random variable sequence  $\{Y_j\}_{j\in\sZ}$, then we have  for any $n\geq 1$,
\begin{equation}
 \alpha_n(\mathbf Y)\leq \alpha_{n-1}(\mathbf X).
			\end{equation}
		\end{lem}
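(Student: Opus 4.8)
The plan is to reduce the statement to a direct comparison of the two families of $\sigma$-fields appearing in the definition \eqref{beta} of the $\alpha$-mixing coefficients, exploiting the fact that each $Y_j$ is measurable with respect to $\sigma(X_j,X_{j+1})$. Write $\mathcal{F}_a^b=\sigma(X_a,\dots,X_b)$ for the $\sigma$-fields generated by $\mathbf X$, and introduce $\mathcal{G}_a^b=\sigma(Y_a,\dots,Y_b)$ for those generated by $\mathbf Y$. The first step is the key measurability observation: since $Y_j=f_j(X_j,X_{j+1})$ with $f_j$ Borel, each $Y_j$ with $j\le k$ is a function of $X_j,X_{j+1}$, whose largest index is $k+1$; hence the entire left-tail field satisfies $\mathcal{G}_{-\infty}^{k}\subseteq\mathcal{F}_{-\infty}^{k+1}$. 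Symmetrically, each $Y_j$ with $j\ge k+n$ is a function of $X_j,X_{j+1}$, whose smallest index is $k+n$, so $\mathcal{G}_{k+n}^{+\infty}\subseteq\mathcal{F}_{k+n}^{+\infty}$.

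The second step is to track the index gap. Given any $A\in\mathcal{G}_{-\infty}^{k}$ and $B\in\mathcal{G}_{k+n}^{+\infty}$ separated by $n$ in the $\mathbf Y$-indexing, the containments above place $A\in\mathcal{F}_{-\infty}^{k+1}$ and $B\in\mathcal{F}_{k+n}^{+\infty}$. Setting $k'=k+1$, these form an admissible pair for the $\mathbf X$-coefficient at separation $(k+n)-(k+1)=n-1$, that is, $A\in\mathcal{F}_{-\infty}^{k'}$ and $B\in\mathcal{F}_{k'+(n-1)}^{+\infty}$. By the definition of $\alpha_{n-1}(\mathbf X)$ we therefore have $|P(A\cap B)-P(A)P(B)|\le\alpha_{n-1}(\mathbf X)$ for every such pair. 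Taking the supremum over all admissible $A,B$ and over all $k$ in the $\mathbf Y$-definition yields $\alpha_n(\mathbf Y)\le\alpha_{n-1}(\mathbf X)$, as claimed.

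There is no serious analytic obstacle here; the argument is a bookkeeping reduction built on the forward-reaching dependence $Y_j=f_j(X_j,X_{j+1})$. The one point that requires care — and the reason the gap drops from $n$ to $n-1$ rather than staying at $n$ — is the asymmetry of this dependence: the left field $\mathcal{G}_{-\infty}^{k}$ already \emph{sees} the variable $X_{k+1}$ through $Y_k$, so it is only contained in $\mathcal{F}_{-\infty}^{k+1}$ and not in $\mathcal{F}_{-\infty}^{k}$, whereas the right field needs no such shift. Keeping this index shift straight is the only place an error could creep in, and it is precisely what accounts for the loss of one unit of separation in the bound.
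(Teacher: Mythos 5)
Your proof is correct and follows essentially the same route as the paper's: both rest on the measurability containment $\sigma(Y_j)\subset\sigma(X_j,X_{j+1})$, which forces the $\mathbf Y$-generated $\sigma$-fields into $\mathbf X$-generated $\sigma$-fields whose index range extends one step further to the right, and then apply the definition \eqref{beta} with the separation reduced from $n$ to $n-1$. Your write-up merely makes explicit the tail-field bookkeeping ($\mathcal{G}_{-\infty}^{k}\subseteq\mathcal{F}_{-\infty}^{k+1}$, $\mathcal{G}_{k+n}^{+\infty}\subseteq\mathcal{F}_{k+n}^{+\infty}$) that the paper leaves as a one-line remark.
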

This lemma can be proved by the definition of the  $\alpha$-mixing coefficient, and the fact that $\sigma(Y_{j})\subset\sigma(X_j,X_{j+1})$ and as a result, $\sigma(Y_k,Y_{k+1},\cdots,Y_{k+n})\subset\sigma(X_k,X_{k+1},\cdots,X_{k+n+1})$.


Based on  Assumption \ref{assump1}, we define the following length to characterize the range of short-range order, based on the nearest neighbor interaction energies associated with $\{(\omega_j,\omega_{j+1})\}$ (i.e., the bounds between atoms $\{(a_j,a_{j+1})\}$).

\begin{defi}[Length of range of short-range order]\label{def1}
Denote the length of range of nonzero short range order with respect to an atom as $H$:
\begin{equation}
   	H:=n_s\bar{h}, \ n_s:=2N_s+2, \ {\rm where} \ \bar{h}=\rmE(h_j).\label{smallest range}
   \end{equation}
\end{defi}

 \begin{figure}[htbp]
   \centering
   \includegraphics[scale=0.97]{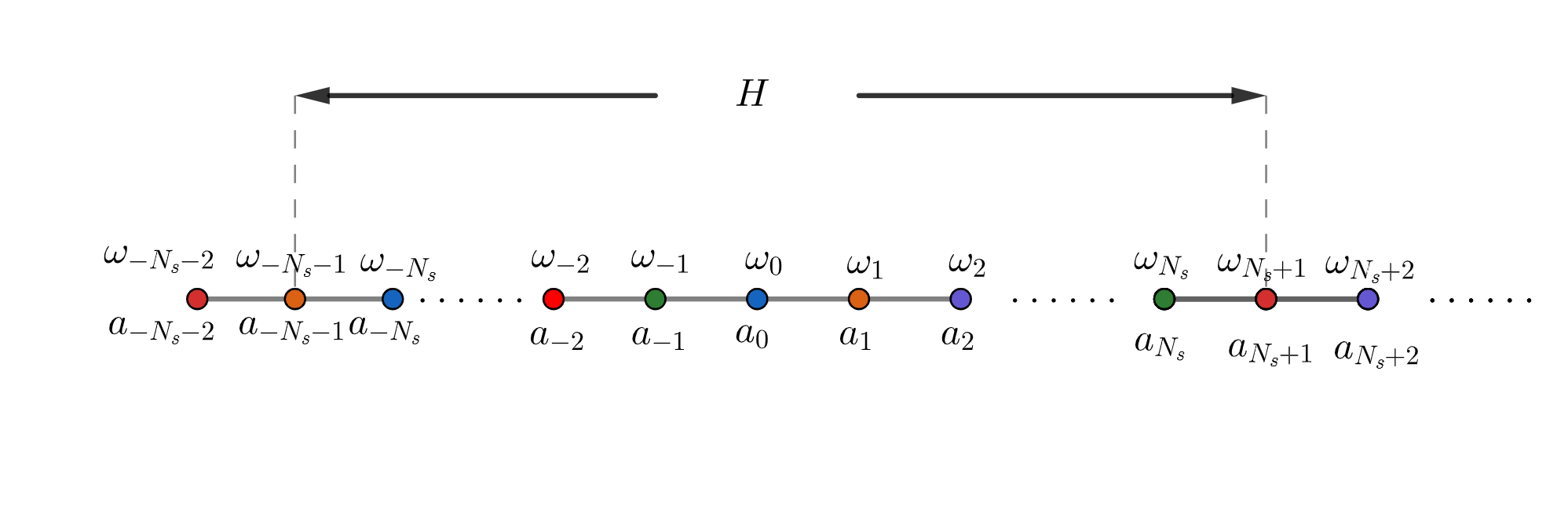}
   \caption{The range of nonzero short range order for the bond between atoms $a_0$ and $a_1$ (i.e., with respect to the random variable pair $(\omega_0,\omega_{1})$).}
   \label{h}
   \end{figure}

 Fig.~\ref{h} illustrates this length $H$ of nonzero short range order for the bond between atoms $a_0$ and $a_1$, i.e., it is independent with those bonds outside this range. Note that $H/2$ is the correlation length in this atomistic model.
	Here we neglect the perturbation of lattice constant for simplicity in this definition. Our analysis also works for stochastic $H$.

  In order to derive a continuum model from the atomistic model, we assume that there are three length scales: the atomic scale, the supercell with size of $H$, and the continuum scale. This is summarized in the following assumption.
  \begin{assump}\label{atom1}
   Let	 $h_j$ for $j\in \sZ$ be the random lattice constant given by Eq.~(\ref{random lattice constant}) with $\bar{h}=\rmE(h_j)$, and $H$ be the length of short-range order defined in Eq.~(\ref{smallest range}). Then we have
   \begin{equation}
   \bar{h}\ll H \ll L,
   \end{equation}
   where $L$ is the length scale of the continuum model. Moreover, we assume that the factor $C$ in Eq.~\eqref{eqn:decayrate} for the decay of the $\alpha$-mixing coefficient $\alpha_n$ satisfies $C=O(1)$.
  \end{assump}

Finally in this subsection, we define the continuum limit of the average of random variables, which will be used  to obtain the stochastic continuum models.

\begin{defi}[Continuum limit of average of random variables]\label{def:average}
  For the  sequence of random variables $\{X_i\}_{i=1}^{n}$, 
  we want to approximate each random variable by the average of $\{X_i\}_{i=1}^{n}$ to obtain the approximate sequence $\{Y_i\}_{i=1}^{n}$.
  Assume that  for each $X_i$, it can be written as:
  \begin{equation}
  		X_i=\rmE(X_i)+\lambda_i U+R_i,
  	\end{equation}
  where $\lambda_i$ is a deterministic number, $U$ is a random variable with $\rmE(U)=0$, and $\{R_i\}_{i=1}^{n}$ are a sequence of random variables with  $\alpha$-mixing coefficient $\alpha_m=O(m^{-5})$. If the following limits exist:
  \begin{flalign}
  E=&\lim_{n \to +\infty}\frac{1}{n}\sum_{i=1}^{n}\rmE(X_i),\\
   \lambda=&\lim_{n \to +\infty}\frac{1}{n}\sum_{i=1}^{n}\lambda_i,\\
   \Delta_R^2=&\lim_{n\to+\infty}\frac{1}{n}\operatorname{Var}\left(\sum_{i=1}^{n}R_i\right),
  \end{flalign}
   then as $n\to\infty$, the averaged random variable $Y_i$ is defined as
  \begin{equation}
   	Y_i:=E+\lambda U+Z_i,\label{average variable}
  \end{equation}
  where $\{Z_i\}_{i=1}^{n}$ are a sequence of independent identically distributed (i.i.d.) random variables with the Gaussian distribution $\fN(0,\Delta_R^2)$.
  \end{defi}

  This definition of continuum limit of average of random variables is different from that in the deterministic case. In addition to the continuum limit of the simple average $A={\displaystyle \lim_{n\to\infty}}\frac{1}{n}\sum_{i=1}^{n}X_i=E+\lambda U$, there is also a contribution ${\displaystyle \lim_{n\to\infty}}\frac{1}{\sqrt{n}}\sum_{i=1}^{n}R_i=Z_i$ to account for the average of variances, which comes from the weakly dependent sequence $\{R_i\}_{i=1}^{n}$ and vanishes in the simple average: ${\displaystyle \lim_{n\to\infty}}\frac{1}{n}\sum_{i=1}^{n}R_i=0$ due to the generalized central limit theorem  {\bf Theorem}~\cite[Theorem 27.4]{billingsley2008probability} shown above. The above definition guarantees that $ \sum_{i=1}^n Z_i$ and $\sum_{i=1}^n R_i$ have the same collective behavior, i.e., $\sim \fN(0,n\Delta_R^2)$ for large $n$.

  For an example, for the i.i.d. random variables $\{X_i\}_{i=1}^{n}$ with Gaussian distribution $\fN(E,\sigma^2)$, the averaged sequence given by the above definition are $\{Y_i\}_{i=1}^{n}$ that are still i.i.d. with Gaussian distribution $\fN(E,\sigma^2)$, which is the desired result.
%
  On the other hand, if we simply use ${\displaystyle \lim_{n\to\infty}}\frac{1}{n}\sum_{i=1}^{n}X_i$ to approximate each random variable,  then we have $Y_i= E$ for all $i$. The information of variance, i.e., the information of randomness, is lost in this continuum limit.

  \subsection{Stochastic elasticity model}\label{sto elas}
  In this subsection, we derive a continuum stochastic elasticity theory in HEAs with short range order from the  atomistic model. We start from continuum approximation of the atomic-level stochastic elastic modulus $\{\beta_j\}$ defined in Eq.~\eqref{atom_constant_beta}. From its definition,  $\beta_j=\beta(\omega_j, \omega_{j+1})$ depending on the interaction of the bond between atoms $a_j$ and $a_{j+1}$. As discussed in the previous subsection on the interaction energies associated with atomic bonds,
  these $\beta_j$'s have identical distribution by the setting of the system as described in the previous subsection, but they are not necessarily independent due to the existence of the short-range order.
   The length of range of short-range order of $\{\beta_j\}$ is  $H$ specified in Definition \ref{def1}.
   We also have the following two lemmas that are related to the short-range order of $\{\beta_j\}$.


   \begin{lem} \label{lemma1}
   For the atomic-level stochastic elastic modulus $\{\beta_j\}$ defined in Eqs.~\eqref{atom_constant_beta} and \eqref{atom_constant_beta1},
 we have
   \begin{equation}
   		\alpha_n\le C{n}^{-5} \ {\rm for} \  n\leq N_s+1, \ {\rm and}\  \alpha_n=0 \ {\rm for} \ n>N_s+1,\label{eqn:decayrate-beta}
   	\end{equation}
   and this means that $\beta_j$ and $\beta_{j+n}$ are independent if $n>N_s+1$, i.e.,
 \begin{equation}
  			\operatorname{Cov}(\beta_j,\beta_{j+n})=0, \ \text{ for } \  n>N_s+1.
  		\end{equation}
  \end{lem}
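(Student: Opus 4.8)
The plan is to obtain Lemma \ref{lemma1} as a direct consequence of Lemma \ref{barH}, applied to the sequences $X_j=\omega_j$ and $Y_j=\beta_j$. The first and essential step is to observe that each $\beta_j$ is a deterministic function of the single neighboring pair $(\omega_j,\omega_{j+1})$. Indeed, by \eqref{random lattice constant} the random lattice constant $h_j=h(\omega_j,\omega_{j+1})$ is determined by $(\omega_j,\omega_{j+1})$ alone, and then $V''_j$ in \eqref{atom_constant_beta1} is the second derivative of $V(\cdot,\omega_j,\omega_{j+1})$ evaluated at $r=h_j$; hence $\beta_j=V''_jh_j=f(\omega_j,\omega_{j+1})$ for one fixed function $f$ (independent of $j$ by stationarity of the system). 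Because $\Omega=\{e_1,\dots,e_m\}$ is a finite set, the pair $(\omega_j,\omega_{j+1})$ takes only finitely many values, so $f$ is automatically Borel measurable and the hypothesis of Lemma \ref{barH} holds with no further checking.

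With this identification, Lemma \ref{barH} yields $\alpha_n(\boldsymbol\beta)\le\alpha_{n-1}(\boldsymbol\omega)$ for every $n\ge 1$, where $\alpha_n(\boldsymbol\beta)$ and $\alpha_n(\boldsymbol\omega)$ denote the $\alpha$-mixing coefficients of $\{\beta_j\}$ and $\{\omega_j\}$ respectively. The two assertions of the lemma then follow by feeding in the decay of $\{\omega_j\}$ from Assumption \ref{assump1}. For the cutoff, if $n>N_s+1$ then $n-1>N_s$, so $\alpha_{n-1}(\boldsymbol\omega)=0$ and hence $\alpha_n(\boldsymbol\beta)=0$. Taking $k=j$ in the definition \eqref{beta} (with the $\sigma$-fields now generated by $\{\beta_j\}$), any $A\in\sigma(\beta_j)\subseteq\mathcal F_{-\infty}^{j}$ and $B\in\sigma(\beta_{j+n})\subseteq\mathcal F_{j+n}^{+\infty}$ satisfy $\rmP(A\cap B)=\rmP(A)\rmP(B)$, i.e.\ $\beta_j$ and $\beta_{j+n}$ are independent, which immediately gives $\operatorname{Cov}(\beta_j,\beta_{j+n})=0$ in this range.

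For the decay rate in the finite range $n\le N_s+1$, the bound $\alpha_n(\boldsymbol\beta)\le\alpha_{n-1}(\boldsymbol\omega)\le C(n-1)^{-5}$ holds for $2\le n\le N_s+1$, while the single endpoint $n=1$ is covered by the trivial bound $\alpha_1(\boldsymbol\beta)\le 1$. Since this range is finite, these finitely many bounds can be absorbed into one constant of the same $O(1)$ order (consistent with $C=O(1)$ in Assumption \ref{atom1}), producing $\alpha_n(\boldsymbol\beta)\le Cn^{-5}$ as stated in \eqref{eqn:decayrate-beta}. I do not anticipate a genuine obstacle: the argument is essentially a corollary of Lemma \ref{barH}, and the only point needing care is the index shift $n\mapsto n-1$, which reflects that $\beta_j$ couples the two sites $\omega_j,\omega_{j+1}$ and is exactly why the cutoff for $\{\beta_j\}$ moves from $N_s$ to $N_s+1$. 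The finiteness of $\Omega$ disposes of what would otherwise be the sole technical subtlety, namely the measurable dependence of the implicitly defined lattice constant $h_j$ on the element types.
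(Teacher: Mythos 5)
Your proof is correct and takes essentially the same route as the paper, which disposes of the lemma in one line by citing exactly the ingredients you use: Assumption~\ref{assump1}, the fact that $\beta_j$ is a function of $(\omega_j,\omega_{j+1})$ via Eqs.~\eqref{atom_constant_beta}--\eqref{atom_constant_beta1}, and the index-shift bound $\alpha_n(\mathbf Y)\leq\alpha_{n-1}(\mathbf X)$ of Lemma~\ref{barH}. The details you supply (measurability from finiteness of $\Omega$, the $n=1$ endpoint, and absorbing the shift $(n-1)^{-5}$ versus $n^{-5}$ into an $O(1)$ constant) are precisely the points the paper leaves implicit.
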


The conclusions in this lemma come directly from Assumption \ref{assump1},  the definition of $\{\beta_j\}$ in Eqs.~\eqref{atom_constant_beta} and \eqref{atom_constant_beta1},
and Lemma \ref{barH}.

%

\begin{lem}\label{convergence}
	Suppose Assumption \ref{assump1} holds. Consider the average of variance:
\begin{equation}
		f(m):=\frac{1}{m}\operatorname{Var}\left(\sum_{j=1}^{m}\beta_j\right)=\frac{1}{m}\sum_{1\le i,j\le m}\operatorname{Cov}(\beta_j,\beta_i).
	\end{equation}
We have
\begin{equation}
\lim_{m\to+\infty}f(m)=\Delta_e^2,
\end{equation}
where
\begin{equation}
\Delta_e^2:=\sum_{j= -N_s-1}^{N_s+1}\operatorname{Cov}(\beta_j,\beta_0).\label{variance}
	\end{equation}
Furthermore,  there is a constant $C$, such that for $m\ge 2N_s+3$,
\begin{equation}
	\left|f(m)-\Delta_e^2\right|\leq\frac{C}{m}.
\end{equation}
\end{lem}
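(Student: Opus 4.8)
The plan is to exploit the finite-range dependence of $\{\beta_j\}$ guaranteed by Lemma \ref{lemma1} to reduce the double covariance sum to a finite single sum, and then read off both the limit and the rate directly. First I would record that, by the translation-invariant construction of the system, the sequence $\{\beta_j\}$ is stationary, so $\operatorname{Cov}(\beta_i,\beta_j)$ depends only on the separation $|i-j|$. Writing $c_k:=\operatorname{Cov}(\beta_0,\beta_k)$, I note $c_k=c_{-k}$ by symmetry of the covariance, that each $c_k$ is finite (since $\omega_j$ takes values in the finite set $\Omega$, the quantity $\beta_j=\beta(\omega_j,\omega_{j+1})$ is bounded and hence has finite variance), and crucially that $c_k=0$ whenever $|k|>N_s+1$ by Lemma \ref{lemma1}.

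Next I would rearrange the sum by collecting terms with a common separation. Counting the pairs $(i,j)$ with $1\le i,j\le m$ and $i-j=k$ gives exactly $m-|k|$ of them for each $k$ with $|k|\le m-1$, so that $\sum_{1\le i,j\le m}\operatorname{Cov}(\beta_i,\beta_j)=\sum_{k=-(m-1)}^{m-1}(m-|k|)c_k$, and hence
\[
f(m)=\sum_{k=-(m-1)}^{m-1}\Bigl(1-\frac{|k|}{m}\Bigr)c_k.
\]
Once $m$ is large enough that $m-1\ge N_s+1$ — which holds in particular for every $m\ge 2N_s+3$, since then $m-1\ge 2N_s+2$ — the finite-range property $c_k=0$ for $|k|>N_s+1$ lets me truncate the sum to $|k|\le N_s+1$, giving
\[
f(m)=\sum_{k=-N_s-1}^{N_s+1}c_k-\frac{1}{m}\sum_{k=-N_s-1}^{N_s+1}|k|\,c_k.
\]
The first sum is precisely $\Delta_e^2$ as defined in Eq.~\eqref{variance} (using $\operatorname{Cov}(\beta_j,\beta_0)=c_j$), so that $f(m)-\Delta_e^2=-\frac{1}{m}\sum_{k=-N_s-1}^{N_s+1}|k|\,c_k$.

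Finally, since the right-hand side involves a fixed, finite collection of finite covariances, letting $m\to\infty$ gives $\lim_{m\to\infty}f(m)=\Delta_e^2$, while setting $C:=\sum_{k=-N_s-1}^{N_s+1}|k|\,|c_k|$ yields the quantitative estimate $|f(m)-\Delta_e^2|\le C/m$ for all $m\ge 2N_s+3$. I do not anticipate a genuine obstacle here: this is the standard variance-of-partial-sums computation for a stationary, finitely dependent sequence, and the only points requiring care are the combinatorial bookkeeping of the multiplicities $m-|k|$ and the verification that the covariances are finite (for which boundedness of $\beta_j$, rather than the polynomial decay in Assumption \ref{assump1}, is what is actually used). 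It is worth remarking that the full bound $\alpha_n\le Cn^{-5}$ plays no role in this lemma; only the \emph{exact} finite range $c_k=0$ for $|k|>N_s+1$ from Lemma \ref{lemma1} enters, which is why the convergence comes out so cleanly.
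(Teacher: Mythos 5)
Your proof is correct, but it takes a different route from the paper's. The paper fixes the outer summation index $k$ and splits $f(m)=\frac{1}{m}\sum_{k=1}^{m}\sum_{j=1}^{m}\operatorname{Cov}(\beta_j,\beta_k)$ into an interior part ($N_s+2\le k\le m-N_s-1$), where the full covariance window fits inside $\{1,\dots,m\}$ so that $\sum_{j=1}^{m}\operatorname{Cov}(\beta_j,\beta_k)=\Delta_e^2$ exactly and these terms cancel, plus $2N_s+2$ boundary indices $k$ whose contributions are each bounded by $v^*+\Delta_e^2$ with $v^*=\sum_{j=-N_s-1}^{N_s+1}|\operatorname{Cov}(\beta_j,\beta_0)|$; this yields the constant $C=(2N_s+2)(v^*+\Delta_e^2)$. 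You instead reorganize the double sum along diagonals $i-j=k$, counting the multiplicity $m-|k|$ of each separation, which gives the exact identity $f(m)=\Delta_e^2-\frac{1}{m}\sum_{|k|\le N_s+1}|k|\,c_k$ once $m-1\ge N_s+1$. Your route buys an exact formula rather than an estimate, a sharper constant $C=\sum_{|k|\le N_s+1}|k|\,|c_k|$, and validity already for $m\ge N_s+2$ rather than $m\ge 2N_s+3$; the paper's route avoids explicit diagonal bookkeeping and generalizes slightly more easily to settings where one only has a bound on, rather than an exact expression for, the inner sums. Both arguments rest on the same two inputs, which you make more explicit than the paper does: second-order stationarity of $\{\beta_j\}$ (so that $\operatorname{Cov}(\beta_i,\beta_j)$ depends only on $i-j$; the paper uses this implicitly when it identifies the interior inner sums with $\Delta_e^2$) and the exact finite-range property $c_k=0$ for $|k|>N_s+1$ from Lemma \ref{lemma1}. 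Your closing observation is also accurate, and applies equally to the paper's proof: the polynomial decay rate $\alpha_n\le Cn^{-5}$ in Assumption \ref{assump1} is never used in this lemma, only the exact cutoff beyond $N_s+1$; the decay rate matters elsewhere, namely for invoking the generalized central limit theorem.
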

\begin{proof}
  For $m\ge 2N_s+3$, we have
  	\begin{align} 		
  \left|f(m)-\Delta_e^2\right|
  \le
&\left|\frac{1}{m}\sum_{k=N_s+2}^{m-N_s-1}\left[\sum_{j=1}^{m}\operatorname{Cov}(\beta_j,\beta_k)-\Delta_e^2\right]
 \right|\notag\\
  &+\left|\frac{1}{m}\left(\sum_{k=1}^{N_s+1}+\sum_{k=m-N_s}^{m}\right)
  \left[\sum_{j=1}^{m}\operatorname{Cov}(\beta_j,\beta_k)-\Delta_e^2\right]\right|\notag\\
  \le&0+\frac{2N_s+2}{m}(v^*+\Delta_e^2)\notag\\
  =&\frac{2N_s+2}{m}(v^*+\Delta_e^2), \label{converge}
  	\end{align}
  	where $v^*=\sum_{j= -N_s-1}^{N_s+1}\left|\operatorname{Cov}(\beta_j,\beta_0)\right|$. Here in the first inequality, the summation with respect to $k$ in $f(m)$ is divided into three parts: $\sum_{k=N_s+2}^{m-N_s-1}$, $\sum_{k=1}^{N_s+1}$ and $\sum_{k=m-N_s}^{m}$.
  Note that $\sum_{j=1}^{m}\operatorname{Cov}(\beta_j,\beta_k)=\sum_{j=k-N_s-1}^{k+N_s+1}\operatorname{Cov}(\beta_j,\beta_k)=\Delta_e^2$ in the first part.
   Denote $C:=(2N_s+2)(v^*+\Delta_e^2)$, then we have $\left|f(m)-\Delta_e^2\right|\le C/m$ when $m\ge 2N_{s}+3$. Hence ${\displaystyle \lim_{m\to+\infty}}f(m)=\Delta_e^2$ holds.
  \end{proof}

Now we consider the continuum approximation of $\D \beta(x)$.  Starting from $\beta_i$, consider $\{\beta_j\}$ over a cell with size $H$ defined in Eq.~\eqref{smallest range} which contains $n_s$ atomic sites, i.e.,
 $\beta_{k+i}$, $k=0, 1, 2,\cdots, n_s$. We have
  \begin{equation}
   	\sum_{k=0}^{n_s}(\beta_{k+i}-\beta_{i})=(n_s+1)\bar{\beta}-(n_s+1)\beta_i+\sum_{k=0}^{n_s}\left(\beta_{k+i}-\rmE \beta_{k+i}\right),\label{atom equation 2}
  \end{equation}
where
  \begin{equation}
  	\bar{\beta}:=\rmE\beta_i.
  \end{equation}
Note that $\bar{\beta}=\rmE\beta_j$ for any $j$.

   Let $\beta(x)$ be the elastic modulus on the continuum scale. We want to use the average of $ \sum_{k=0}^{n_s}(\beta_{k+i}-\beta_{i})$ to approximate $\frac{1}{2}\,\D \beta(x)$ over the length $H$.
   (Note that in the deterministic case, $\sum_{k=0}^{n_s}(\beta_{k+i}-\beta_{i})\approx\frac{1}{2}n_s(n_s+1)\beta'(a_i)\bar{h}$, while $\frac{1}{2}\,\D \beta(x)\approx\frac{1}{2}n_s\beta'(a_i)\bar{h}$, i.e., $\frac{1}{n_s+1}\sum_{k=0}^{n_s}(\beta_{k+i}-\beta_{i})\approx\frac{1}{2}\,\D \beta(x)$.) From Eq.~\eqref{atom equation 2},  the average of $ \sum_{k=0}^{n_s}(\beta_{k+i}-\beta_{i})$ equals the average of  $(n_s+1)\bar{\beta}-(n_s+1)\beta_i+\sum_{k=0}^{n_s}\left(\beta_{k+i}-\rmE \beta_{k+i}\right)$,
   which by Definition~\ref{def:average} of the continuum limit of average of random variables and the decay property of the $\alpha$-mixing coefficient $\alpha_n$ of $\{\beta_j\}$ in Lemma~\ref{lemma1}, is approximately
  $\bar{\beta}-\beta_i+\mathcal{N}(0,\Delta_e^2)$,  where $\Delta_e^2$ is defined in Eq.~\eqref{variance}.
    Note that
the limit $n_s\to \infty$ holds due to the condition $H\gg\bar{h}$ in Assumption~\ref{atom1}, where $n_s=H/\bar{h}$.

To summarize,  the continuum limit of the average of $\sum_{k=0}^{n_s}(\beta_{k+i}-\beta_{i})$, i.e., $\frac{1}{2}\,\D \beta(x)$ over length $H$ starting from site $a_i$, is
  \begin{equation}\label{limit equation00}
   	\frac{1}{2}\,\D \beta(x)\approx\bar{\beta}-\beta_i+\mathcal{N}(0,\Delta_e^2).
   \end{equation}
   This averaged increment defines $\beta_{i+\frac{n_s}{2}}$ in the continuum formulation in the middle of the supercell between locations $a_i$ (with value $\beta_{i}$) and
   $a_{i+n_s}$ (with value $\beta_{i+n_s}$). That is, in the continuum model,  the values of  $\beta_{i+jn_s}$, for integer $j$, are inherited directly from the atomistic model, and the values of $\beta_{i+(j+\frac{1}{2})n_s}$ are defined through the averaged increment within a supercell in the atomistic model as given in Eq.~\eqref{limit equation00}; see an illustration of the values defined in the continuum model in Fig.~\ref{link}.  The averaged increment  $\frac{1}{2}\,\D \beta$ obtained in
    Eq.~\eqref{limit equation00} serves as a link from atomistic model to continuum model that passes the atomic level short-range order to the continuum model.


\begin{figure}[htbp]
	\centering
	\includegraphics[width=0.9\linewidth]{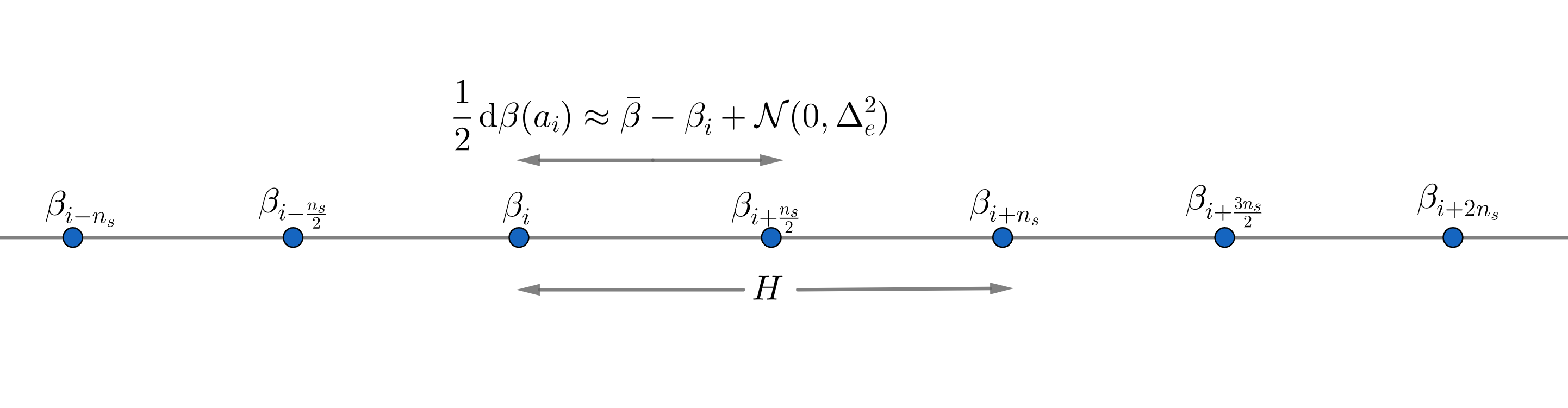}
	\caption{ Illustration of the values of $\beta(x)$ defined in the continuum model. In the continuum model,  the values of  $\beta_{i+jn_s}$, for integer $j$, are inherited directly from the atomistic model; and the values of $\beta_{i+(j+\frac{1}{2})n_s}$ are defined through the averaged increment within a supercell in the atomistic model as given in Eq.~\eqref{limit equation00}, through which the atomic level short-range order is passed to the continuum model.}
	\label{link}
\end{figure}

Since we are approximating $\,\D \beta$ over the length of $H$ scale, setting $\,\D x=H$ and $\,\D B_x=B_{x+H}-B_{x}$, where $B_x$ is the Brownian motion, we have the approximation at $x=a_i$:
   \begin{equation}
    \frac{1}{2}\,\D \beta(x)=\frac{\bar{\beta}-\beta(x)}{H}\,\D x+\frac{\Delta_e}{\sqrt{H}}\,\D B_x.
    \label{limit equation}
   \end{equation}
  This is the Ornstein--Uhlenbeck (OU) process \cite{uhlenbeck1930theory,doob1942brownian,wang1945theory,karatzas1998brownian}. The OU process  has been employed phenomenologically   in Ref.~\cite{zhang2019effect} to model the short-range order in HEAs, and here we provide a rigorous derivation from stochastic atomistic model with parameters directly from the atomistic model.

The continuum model in Eq.~(\ref{limit equation}) was obtained based on the approximation of $\D \beta(x)$ over a supercell with size $H$, which is defined in  Eq.~\eqref{smallest range}. This continuum formulation strongly depends on $H$, and here we discuss more on why this length is appropriate for the continuum approximation of $\D \beta(x)$. First, since we want to have a well-defined continuum limit, the length $H_\beta$ over which $\D \beta(x)$ is obtained has to be no less than $H$, so that the variance $\Delta_e^2$ defined in \eqref{variance} does not change when $H_\beta$ is further increased. Moreover, on the continuum level, $\beta(x)$ satisfies Eq.~\eqref{limit equation}, which is an OU process whose correlation length is $H/2$ (given in Eq.~\eqref{cov} below). This correlation length in the continuum model agrees with that in the atomistic model. On the other hand, if $\D \beta(x)$ is approximated by the average over a length $H_\beta>H$, we will have parameter $H_\beta$ instead of $H$ in the continuum model in Eq.~\eqref{limit equation}, and as a result, the correlation length in the continuum model will be $H_\beta/2$, which is strictly greater than the correlation length $H/2$ in the atomistic model. Therefore, $H$ is the appropriate length for the continuum approximation of $\D \beta(x)$ from atomistic model.


From Eq.~(\ref{limit equation}) and the solution formula of the OU process \cite{uhlenbeck1930theory,doob1942brownian,wang1945theory,karatzas1998brownian},
the stochastic elastic modulus $\beta(x)$  is
   \begin{equation}
   	\beta(x)=\bar{\beta}+\Delta_eY_x,\label{solution}
   \end{equation}
   where $\{Y_x\}_{x\in\sR}$ is an OU process:
   \begin{equation}
   Y_x=\int_{-\infty}^x\frac{2}{\sqrt{H}}e^{\frac{2}{H}(s-x)}\,\D B_s.\label{ou_solution}
   \end{equation}
For each point $x$, $\beta(x)$ is a Gaussian:
\begin{equation}
   	\beta(x)\sim\mathcal{N}\left(\bar{\beta},\Delta_e^2\right), \label{distribution}
   \end{equation}
   and the correlation of $\beta(x)$ at two points $x_1$ and $x_2$ is
  	\begin{equation}
  		\operatorname{Cov}(\beta(x_1),\beta(x_2))=\Delta_e^2e^{-\frac{2|x_1-x_2|}{H}}.\label{cov}
  	\end{equation}
Therefore, $\Delta_e$ indicates the amplitude of randomness at each lattice site, and $H/2$ is the correlation length. These agree with the definitions of $\Delta_e^2$ in \eqref{variance} (which is the average of variances at $n_s$ lattice sites) and $H$ in \eqref{smallest range} in the atomistic model.

  With this continuum stochastic elastic modulus, the stochastic elastic energy $E_\text{elas}$ satisfies
   \begin{equation}\label{elas-energy000}
   	\,\D E_\text{elas}=\frac{1}{2}\beta(x)\left(\frac{\,\D u}{\,\D x}\right)^2\,\D x.
   \end{equation}

%
%


 When the range of the short-range order, i.e., the correlation length, $H\to+\infty$,   by Eq.~\eqref{cov},  $\rmE\left([\beta(x_1)-\beta(x_2)]^2\right)=2\Delta_e^2\left(1-e^{-\frac{2|x_1-x_2|}{H}}\right)\to0$ for any $x_1$ and $x_2$, which is the case of uniform randomness. When $H\to 0$, by Eq.~\eqref{cov},  $\operatorname{Cov}(\beta(x_1),\beta(x_2))\to 0$ for $x_1\neq x_2$, which means that $\beta(x_1)$ and $\beta(x_2)$ are independent for any $x_1\neq x_2$. Especially, in the regime of $H\to 0$ with $\beta(x_1)$ and $\beta(x_2)$ being independent for different lattice sites $x_1\neq x_2$, the right-hand side of Eq.~\eqref{limit equation} dominates, and Eq.~\eqref{limit equation} becomes
 $\beta(x)\,\D x= \bar{\beta}\,\D x+\Delta_e\sqrt{\bar{h}}\,\D B_x$, where $\bar{h}$ is the average lattice constant which is the smallest distance between atomic sites. In this independent case, we have
  \begin{equation}
   	\,\D E_\text{elas}=\frac{1}{2} \left(\frac{\,\D u}{\,\D x}\right)^2\left(\bar{\beta}\,\D x+\Delta_e\sqrt{\bar{h}}\,\D B_x\right).\label{indpendent elas}
   \end{equation}
   This agrees with the energy formulation derived in Ref.~\cite{jiang2020stochastic} ((5.14) there) under the assumption of independent randomness, i.e., without short-range order.

%
%

In the continuum formulation of the elastic energy 	$E_\text{elas}$ given in  Eq.~\eqref{elas-energy000}, both the elastic modulus $\beta(x)$ and the displacement gradient $\frac{\,\D u}{\,\D x}$ contain the effect of randomness. When the effect of randomness is small, we have
    \begin{flalign}
   	E_\text{elas}=&\int_\sR\frac{1}{2}\beta(x)\left(\frac{\,\D u}{\,\D x}\right)^2\,\D x\notag\\=&\int_\sR\frac{1}{2}(\bar{\beta}+\delta\beta(x))\left(\frac{\,\D \bar{u}}{\,\D x}+\frac{\,\D\delta u}{\,\D x}\right)^2\,\D x\,\notag\\
   \approx&\int_\sR\frac{1}{2}\bar{\beta}\left(\frac{\,\D \bar{u}}{\,\D x}\right)^2\,\D x+\int_\sR\bar{\beta}\frac{\,\D \bar{u}}{\,\D x}\frac{\,\D \delta u}{\,\D x}\,\D x+\int_\sR\frac{1}{2}\delta\beta(x)\left(\frac{\,\D \bar{u}}{\,\D x}\right)^2\,\D x,\label{point defect}
   \end{flalign}
    where $\bar{u}(x)=\rmE u(x)$.
    Here the first term  is the $O(1)$ elastic energy, the second term is the leading order perturbation due to the randomness in displacement (i.e., randomness in lattice constant), and
    the third term is leading order perturbation due to the randomness effect in elastic modulus.

   \section{Stochastic Continuum Model for Dislocations in HEAs with Short-range Order}\label{PN}

   In this section, we consider derivation of continuum model for defects in HEAs with short-range order. We focus on dislocations that are line defects in crystalline materials  \cite{Hirth1982}.

   \subsection{Review of classical Peierls--Nabarro model for dislocations}\label{subsec:cPN}
   The Peierls-Nabarro models \cite{,peierls1940size,nabarro1947dislocations,vitek1968intrinsic,Hirth1982,Xiang2006} are continuum models for dislocations that incorporate the atomistic structure.
   In the classical Peierls--Nabarro model for a dislocation in a crystal with a single type atoms, the slip plane of the dislocation separates the entire system into two continuums described by linear elasticity theory, and the interaction across the slip plane is modeled by a nonlinear potential (the $\gamma$-surface) that comes from the atomic interaction \cite{vitek1968intrinsic}.
   This continuum description enables nonlinear interaction within the dislocation core where the atomic structure is heavily distorted.
   In the Peierls-Nabarro model for the two-layer system, 
   each layer is a continuum governed by linear elasticity, and  there is a nonlinear interaction between the two layers; see Fig.~\ref{class}(b) for an illustration of the atomic structure.

   Consider an edge dislocation in a bilayer system as illustrated in Fig.~\ref{class}(b). The Burger vector of the dislocation is $\vb=(\bar{h},0)$. The displacements in the $x$ direction (along the layer) of the top and bottom layers are $u^+(x)$ and $u^-(x)$, respectively.
  The disregistry (relative displacement) between the two layers is
   \begin{equation}
    	\phi(x):=u^+(x)-u^-(x).
    \end{equation}
For this edge dislocation, $\phi(x)$ satisfies    the boundary condition
     \begin{equation}
    	\lim_{x\to-\infty}\phi(x)=0,~\lim_{x\to+\infty}\phi(x)=\bar{h},
    \end{equation}
where $\bar{h}$ is the lattice constant; see  Fig.~\ref{class}(a).

     \begin{figure}[htbp]
   	\centering
 \includegraphics[width=0.8\textwidth]{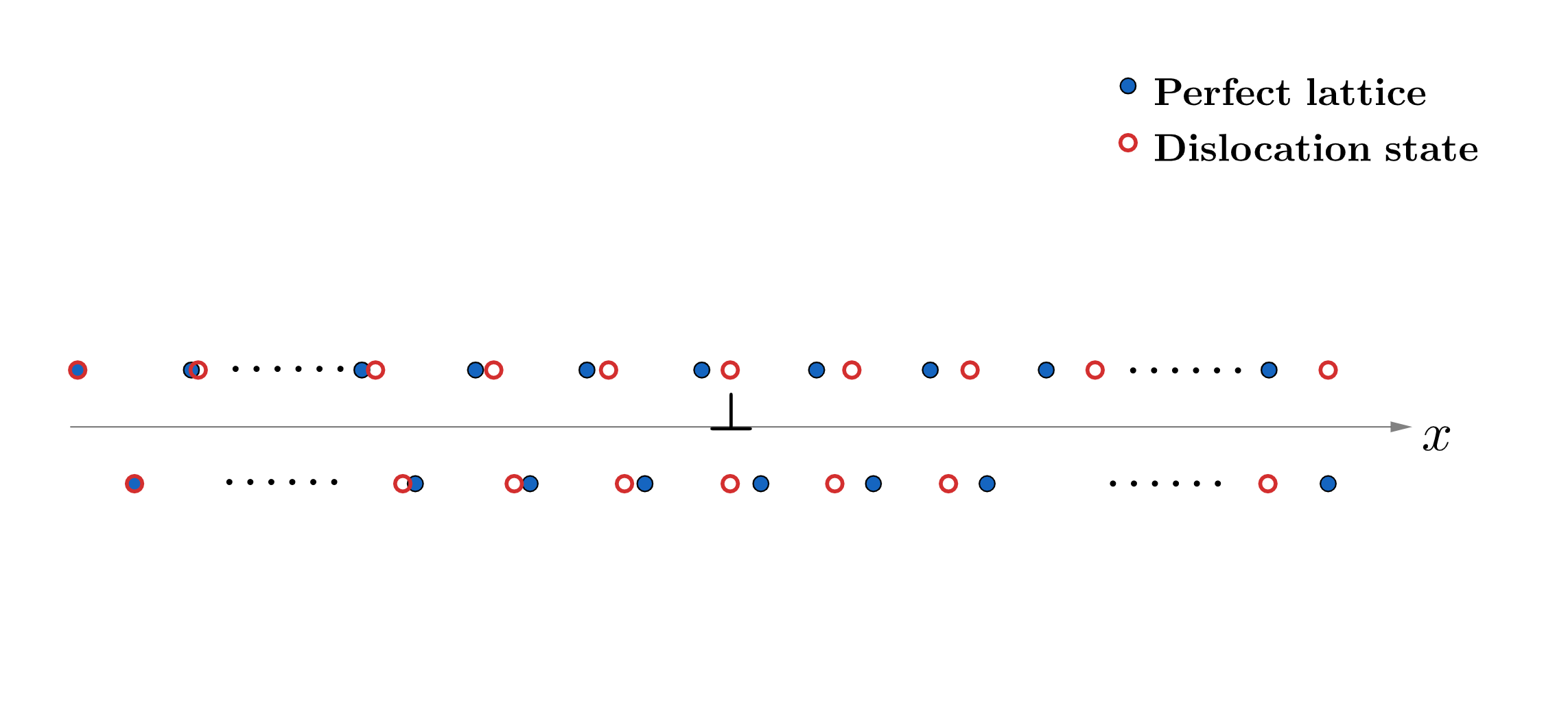}
   	\caption{An edge dislocation in a bilayer system. (a) Peierls-Nabarro model: Schematic
   		illustration of disregistry $\phi(x)$. The sharp transition region is the core of the dislocation. (b) Atomistic model: Schematic
   		illustration of locations of atoms associated with this edge dislocation. The perfect lattice is a triangular lattice. The notation $\perp$ shows location of the dislocation.}
   	\label{class}
   \end{figure}

  In the Peierls-Nabarro model, the total energy is written as the sum of the elastic energy and the misfit energy. The elastic energy is the energy in the two continuums separated by the slip plane, and here in the bilayer system it is the intra-layer elastic energy. The misfit energy in the bilayer system is the inter-layer energy, whose density is the $\gamma$-surface depending on the disregistry $\phi$ \cite{vitek1968intrinsic}, denoted by $\bar{\gamma}(\phi)$.  Under the assumption that $u^+(x)=-u^-(x)$, the elastic energy density can also be written based on $\phi$.  That is,
 \begin{flalign}
      	E_\text{PN}&=   	E_\text{elas}[\phi]   +	E_\text{mis}[\phi],\label{classical}\\
    	E_\text{elas}[\phi]&=\int_\sR \frac{1}{4}\bar{\beta} \left(\frac{\,\D \phi}{\,\D x}\right)^2 \,\D x,\label{classical1}\\
   	E_\text{mis}[\phi]&=\int_\sR \bar{\gamma}(\phi)\,\D x.\label{classical2}
 \end{flalign}
 Here $E_\text{PN}$ is the total energy in the classical Peierls-Nabarro model,
 $E_\text{elas}[\phi]$ and	$E_\text{mis}[\phi]$ are the elastic energy and misfit energy, respectively,
 and $\bar{\beta}$ is the elastic constant.
  The $\gamma$-surface $\bar{\gamma}(\phi)$ can be calculated from the atomistic model as the energy increment when the perfect lattice system has a uniform shift $\phi$ between the two layers \cite{vitek1968intrinsic}.

  Note that a rigorous derivation from atomistic model to the classical Peierls-Nabarro model for the dislocation in a bilayer system with the same type of atoms has been presented in Ref.~\cite{luo2018atomistic}.

   \subsection{Atomistic model of bilayer system of HEA with an edge dislocation}\label{atom_PN}

    The atomic structure of a perfect bilayer HEA is shown in Fig.~\ref{bilayer}(a). Similarly to the single layer HEA discussed in Sec.~\ref{subsec:atom_single}, we denote $j$-th atom at upper (or bottom) layer as $a_j^+$ (or $a_j^-$) and the random variable to describe the element at $a_j^+$ (or $a_j^-$) as $\omega_j^+$ (or $\omega_j^-$). Each random variable $\omega_j^+$ or $\omega_j^-$ has the distribution  in Eq.~\eqref{simple}. The averaged locations of atoms of the bilayer system form a triangular lattice.

    We consider the pairwise potential with nearest neighbor interaction.
    For the upper layer, the random lattice constant $h(\omega_j^+,\omega_{j+1}^+)$ due to the intra-layer interaction is denoted as $h^+_j$, and similarly $h(\omega_j^-,\omega_{j+1}^-)$  as $h^-_j$ in the lower layer. The distance between neighboring inter-layer atoms is $h(\omega_j^+,\omega_{j}^-)$ or $h(\omega_j^+,\omega_{j-1}^-)$.
     Following Sec.~\ref{subsec:atom_single}, the potential for the intra-layer atomic interaction is $V\left(h_j^\pm+u_{j+1}^\pm-u_j^\pm,\omega_j^\pm,\omega_{j+1}^\pm\right)$, and the inter-layer atomic interaction is denoted as $U\left(h(\omega_j^+,\omega_{j}^-),\omega_j^+,\omega_{j}^-\right)$
     or $U\left(h(\omega_j^+,\omega_{j-1}^-),\omega_j^+,\omega_{j-1}^-\right)$.
    We assume that $\rmE(h^+_j)=\rmE(h^-_j)=\rmE(h(\omega_j^+,\omega_{j}^-))=\rmE(h(\omega_j^+,\omega_{j-1}^-))=\bar{h}$. Note that the inter-layer interaction potential and the average inter-atomic distance across the two layers may be different from  those within each layer, and this does not lead to essential difference in the derivation of the continuum model.

     We consider an edge dislocation with Burgers vector $\vb=(\bar{h},0)$
     in bilayer HEA as shown in  Fig.~\ref{bilayer}(b). The displacement field at the lattice sites  $\{a^+_j,a^-_j\}_{j\in\sZ}$ is $\{u^+_j,u^-_j\}_{j\in\sZ}$, which satisfies
   \begin{equation}
   	\lim_{j\to-\infty}u^+_j-u^-_j=0,~\lim_{j\to+\infty}u^+_j-u^-_j=\bar{h}.\label{boundary}
   \end{equation}

\begin{figure}[h]
    	\centering
\includegraphics[width=1.0\textwidth]{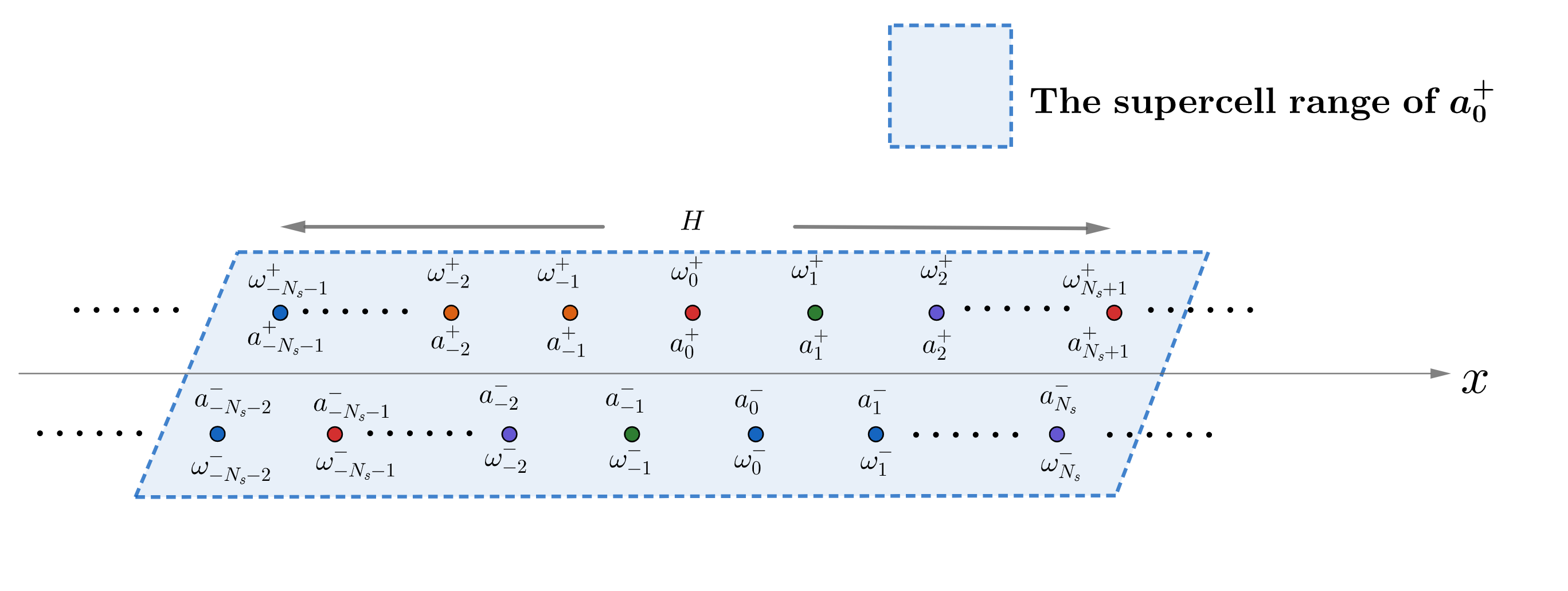}
    	\caption{Atomistic model for a bilayer HEA. (a) Perfect bilayer HEA and a supercell of size $H=(2N_s+2)\bar{h}$. (b) Bilayer HEA with an edge dislocation. The notation $\perp$ shows location of the dislocation.}
    	\label{bilayer}
    \end{figure}

In this paper, we focus on the models for HEAs with short-range order. Similarly to the $\alpha$-mixing coefficients and Assumption~\ref{assump1} for a single layer HEA in Sec.~\ref{assum elas},
we first generalize the definition of $\alpha$-mixing coefficients to two dimensions based on the bilayer HEAs, and then assume a rapidly decaying property of it.

\begin{defi}
  Denote \begin{equation}
  	\sB:=\{\omega_{j}^+\}_{j\in\sZ}\cup\{\omega_{j}^-\}_{j\in\sZ},
  \end{equation}
  and define a dimensionless distance in $\sB$ based on the atomic distance of $\{a_j^\pm\}_{j\in\sZ}$: \begin{equation}
  \left\{
  \begin{array}{l}
  B(\omega_j^\pm,\omega_{j+n}^\pm)=|n|,\vspace{1ex}\\
  B(\omega_j^-,\omega_{j+n}^+)=\sqrt{\left(n+\frac{1}{2}\right)^2+\frac{3}{4}},\vspace{1ex}\\
  B(\omega_j^+,\omega_{j+n}^-)=\sqrt{\left(n-\frac{1}{2}\right)^2+\frac{3}{4}},
  \end{array}
  \right.
\end{equation}
for all $n\in\sZ$. (Note that here the last two formulas are based on a triangle lattice in the bilayer system, and for other lattices them may be slightly different.)
For any $S_1,S_2\subset\sB$, we define
\begin{equation}
B(S_1,S_2)=\inf_{\omega_1\in S_1,\omega_{2}\in S_2}B(\omega_{1},\omega_{2}).
\end{equation}
Then for any $s>0$, the $\alpha$-mixing-type coefficients $\{\alpha_s\}_{s>0}$ of $\sB$ is defined as:
\begin{flalign}
  \alpha_s:=\sup \big\{&|P(A \cap B)-P(A) P(B)|: \forall A \in \sigma(S_1),B \in \sigma(S_2),\nonumber\\
  & S_1,S_2\in\sB, B(S_1,S_2)\le s\big\}.
\end{flalign}
\end{defi}

%
    \begin{assump}\label{assump3}
   	There is a constant number $N_s$ and a constant $C$, such that the	$\alpha$-mixing coefficients $\{\alpha_s\}_{s>0}$ of $\sB$ in the bilayer HEA satisfy
   \begin{equation}
   		\alpha_s\le C{s}^{-5} \ {\rm for} \  s\leq N_s, \ {\rm and}\  \alpha_s=0 \ {\rm for} \ s>N_s.\label{eqn:decayrate3}
   	\end{equation}
   This means that $\omega_1$ and $\omega_2$ are independent if the distance $B(\omega_{1},\omega_{2})>N_s$.
  \end{assump}

 Under this assumption, each atom is only correlated with its $N_s$ nearest neighbors on each side in its own layer and in the other layer. Assumption~\ref{assump1} holds within each layer. Hence we can still use the length of range of short-range order $H=(2N_s+2)\bar{h}$ in Eq.~\eqref{smallest range} of Definition~\ref{def1} for the bilayer HEA (illustrated in Fig.~\ref{bilayer}(a)), and accordingly, we also have Assumption~\ref{atom1} for the bilayer HEA.

   \subsection{Derivation of stochastic Peierls--Nabarro model}

Now we derive a continuum stochastic model for a dislocation in the HEA with
short range order, based on the framework of the classical Peierls-Nabarro model reviewed in Sec.~\ref{subsec:cPN}, and from the atomistic model with Assumption~\ref{assump3} in Sec.~\ref{atom_PN} and Assumption~\ref{atom1} in Sec.~\ref{assum elas}.

In the stochastic Peierls-Nabarro model, the total energy of a dislocation in the bilayer HEA
    consists of the elastic energy for the intra-layer interaction and the misfit energy for the inter-layer interaction, similar to the total energy in the classical Peierls-Nabarro model given in Eqs.~\eqref{classical}--\eqref{classical2} and with stochastic energy densities. As in the classical Peierls-Nabarro model, these energies are expressed in terms of the disregistry between the two layers $\phi(x)=u^+(x)-u^-(x)$, where $u^+(x)$ and $u^-(x)$ are displacements in the upper and lower layers, respectively. We will keep leading order stochastic effects in the elastic energy and misfit energy.

 The continuum model for the stochastic elastic energy in each layer has been obtained in Sec.~\ref{sto elas}, which is
   \begin{equation}\label{elas-energy0000}
   	\,\D E_\text{elas}^\pm=\frac{1}{2}\beta(x)\left(\frac{\,\D u^\pm}{\,\D x}\right)^2\,\D x,
   \end{equation}
where the superscript ``$+$" or ``$-$" indicates the quantities in the upper or lower layer, and the stochastic elastic constant $\beta(x)$ is the OU process governed by Eq.~\eqref{limit equation} with expression in Eq.~\eqref{solution}, which has the properties \eqref{distribution} and \eqref{cov}. In the classical Peierls--Nabarro model, it is assumed
that $u^+(x) = u^-(x)$, and accordingly, $u^+(x) = -u^-(x) = \frac{1}{2}\phi (x)$. With this condition, the total stochastic elastic energy $E_\text{elas}=E_\text{elas}^++E_\text{elas}^-$  can be written as
   \begin{equation}\label{elas-energy0001}
   	\,\D E_\text{elas}=\frac{1}{4}\beta(x)\left(\frac{\,\D \phi}{\,\D x}\right)^2\,\D x.
   \end{equation}

  Now consider the stochastic misfit energy. Following the definition~\cite{vitek1968intrinsic}, the misfit energy density, i.e., the $\gamma$-surface $\gamma(\phi)$,  is calculated from the atomistic model as the energy density increment when the perfect lattice system has a uniform shift $\phi$ between the two layers.
    Convergence from atomistic model to the $\gamma$-surface for dislocations in a system with same type of atoms (i.e., the deterministic case) has been rigourously proved in Ref.~\cite{luo2018atomistic}. In the bilayer HEA, when the average lattice is a triangular one with lattice constant $\bar{h}$, the energy density increment near atom $a_j^+$ is
     \begin{flalign}   	
     \gamma_j(\phi):=\gamma\left(\phi,\omega_j^+,\omega_{j-1}^-,\omega_{j}^-\right)=&\frac{1}{\bar{h}}\big[
     V(h_\phi(\omega_j^+,\omega_{j-1}^-),\omega_j^+,\omega_{j-1}^-)
     -V(h(\omega_j^+,\omega_{j-1}^-),\omega_j^+,\omega_{j-1}^-)\nonumber\\
&     \ \ +V(h_\phi(\omega_j^+,\omega_{j}^-),\omega_j^+,\omega_{j}^-)
     -V(h(\omega_j^+,\omega_{j}^-),\omega_j^+,\omega_{j}^-)
     \big], \label{atom-gamma}
   \end{flalign}
where $h_\phi(\omega_j^+,\omega_{j}^-)$ is the distance between the two atoms $a_j^+$ and $a_j^-$ on the two layers after a uniform shift of $\phi$ between the two layers, and same for $h_\phi(\omega_j^+,\omega_{j-1}^-)$.
That is, when the vector between the two atoms is $(h_1,h_2)$ with $\sqrt{h_1^2+h_2^2}=h$, then $h_\phi=\sqrt{(h_1+\phi)^2+h_2^2}$.


Let $\gamma\left(x,\phi\right)$ be the stochastic $\gamma$-surface in the continuum model.
That is, for the misfit energy, we have
  \begin{equation}\label{mis-energy0001}
   	\,\D E_\text{mis}=\gamma\left(x,\phi\right)\,\D x.
   \end{equation}
 We obtain  $\gamma\left(x,\phi\right)$ from the atomistic model similarly to the derivation of the stochastic elastic modulus $\beta(x)$ in Sec.~\ref{sto elas}, by averaging $\gamma$ and $\D \gamma$ over the top layer within the supercell with size $H$. (Note that averaging over the bottom layer will give the same results.)

 As in Lemmas \ref{lemma1} and \ref{convergence}, here following Assumption~\ref{assump3} for the bilayer HEA, we have that for the atomic-level $\gamma$ surface  $\{\gamma_j(\phi)=\gamma\left(\phi,\omega_j^+,\omega_{j-1}^-,\omega_{j}^-\right)\}$ defined in Eq.~\eqref{atom-gamma},
 its $\alpha$-mixing coefficients  satisfy
   \begin{equation}
   		\alpha_n\le C{n}^{-5} \ {\rm for} \  n\leq N_s+1, \ {\rm and}\  \alpha_n=0 \ {\rm for} \ n>N_s+1,
   	\end{equation}
   and this means that $\gamma_j(\phi)$ and $\gamma_{j+n}(\phi)$ are independent if $n>N_s+1$, i.e.,
\begin{equation}
\operatorname{Cov}(\gamma_j,\gamma_{j+n})=0 \ {\rm for}\  n>N_s+1 \ {\rm and \ any} \ \phi.
\end{equation}
Moreover, we have
\begin{equation}
\lim_{m\to+\infty}f_\gamma(m,\phi)=\Delta^2_\gamma(\phi),
\end{equation}
where
\begin{flalign}		f_\gamma(m,\phi):=&\frac{1}{m}\operatorname{Var}\left(\sum_{j=1}^{m}\gamma_j(\phi)\right)=\frac{1}{m}\sum_{1\le i,j\le m}\operatorname{Cov}(\gamma_j(\phi),\gamma_i(\phi)),\\
\Delta^2_\gamma(\phi):=&\sum_{j= -N_s-1}^{N_s+1}\operatorname{Cov}(\gamma_j(\phi),\gamma_0(\phi)).\label{variance_phi}
\end{flalign}
By  the
generalized central limit theorem (Theorem~\cite[Theorem 27.4]{billingsley2008probability} in Sec.~\ref{assum elas}) for $\{\gamma_j\}$,
and  Definition~\ref{def:average} in Sec.~\ref{assum elas} for the continuum limit of average of random variables $\{\gamma_j\}$ of the top layer within a supercell, as that for $\beta(x)$, we have the equation for $\gamma(x,\phi)$:
   \begin{equation}
    \frac{1}{2}\,\D \gamma(x,\phi)=\frac{\bar{\gamma}(\phi)-\gamma(x,\phi)}{H}\,\D x+\frac{\Delta_\gamma(\phi)}{\sqrt{H}}\,\D B_x,
    \label{limit equation_gamma}
   \end{equation}
where $\bar{\gamma}\left(\phi\right):=\rmE\left(\gamma_j(\phi)\right)$. That is, $\gamma(x,\phi)$ is also an OU process.

Equation \eqref{limit equation_gamma} holds pointwisely on the continuum level, and the solution is
  \begin{equation}
\gamma\left(x,\phi(x)\right)=\int_{-\infty}^x\frac{2}{H}\bar{\gamma}\left(\phi(s)\right)
e^{\frac{2}{H}(s-x)}\,\D s
+\int_{-\infty}^x\frac{2}{\sqrt{H}}
\Delta_\gamma(\phi(s))e^{\frac{2}{H}(s-x)}\,\D B_s.\label{solution misfit0}
  \end{equation}
This solution formula can be further simplified to remove the nonlocal dependence on $\phi(x)$
based on Assumption \ref{atom1}: $H\ll L$, where $L$ is the length scale of the continuum model. The simplified solution formula is
  \begin{equation}
\gamma\left(x,\phi(x)\right)=\bar{\gamma}\left(\phi(x)\right)+\Delta_\gamma(\phi(x))Y_x,
\label{solution misfit}
  \end{equation}
where $Y_x$ is the OU process given in Eq.~\eqref{ou_solution}.

The approximation of the deterministic integral in Eq.~\eqref{solution misfit0} by $\bar{\gamma}\left(\phi(x)\right)$ in Eq.~\eqref{solution misfit} is directly from the Laplace method. For the approximation of the  stochastic integral in Eq.~\eqref{solution misfit0} by $\Delta_\gamma(\phi(x))Y_x$ in Eq.~\eqref{solution misfit}, we can show formally that the relative error is small:
\begin{flalign} &\rmE\left[\int_{-\infty}^x\frac{2}{\sqrt{H}}\left[\Delta_\gamma(\phi(x))-\Delta_\gamma(\phi(s))\right]
e^{\frac{2}{H}(s-x)}\,\D B_s\right]^2\notag\\
=&\int_{-\infty}^x\frac{4}{H}\left[\Delta_\gamma(\phi(x))-\Delta_\gamma(\phi(s))\right]^2e^{\frac{4}{H}(s-x)}\,\D s\notag\\
=&\int_{-\infty}^x\left[\Delta_\gamma(\phi(x))-\Delta_\gamma(\phi(s))\right]^2\,\D e^{\frac{4}{H}(s-x)} \notag\\
=&-\int_{-\infty}^x2\left[\Delta_\gamma(\phi(s))-\Delta_\gamma(\phi(x))\right]\frac{\,\D \Delta_\gamma(\phi(s))}{\,\D s}e^{\frac{4}{H}(s-x)}\,\D s\notag\\
\ll&\int_{-\infty}^x\frac{4}{H}\left(\Delta_\gamma(\phi(s))\right)^2e^{\frac{4}{H}(s-x)}\,\D s, \ \ H\ll L \notag\\
=&\rmE\left[\int_{-\infty}^x\frac{2}{\sqrt{H}}\Delta_\gamma(\phi(s))e^{\frac{4}{H}(s-x)}\,\D B_s\right]^2.
\end{flalign}

The total stochastic energy density $W\left(x,\frac{\,\D \phi}{\,\D x},\phi\right)$
includes elastic energy and misfit energy as well as their correlation.   The total energy $E_\text{PN}=E_\text{elas}+E_\text{mis}$ can be written as
\begin{equation}\label{PN-energy0001}
   	\,\D E_\text{PN}=W\left(x,\frac{\,\D \phi}{\,\D x},\phi\right)\,\D x.
   \end{equation}
The stochastic elastic energy density $\frac{\D E_\text{elas}}{\D x}$ has been obtained in
Eq.~\eqref{elas-energy0001} and stochastic $\gamma$-surface $\gamma\left(x,\phi\right)$ in
Eq.~\eqref{solution misfit}. Following the same argument for the continuum limit of the average over the top layer within a supercell,  $W\left(x,\frac{\,\D \phi}{\,\D x},\phi\right)$ is also an OU process that satisfies
   \begin{equation}
   	\frac{1}{2}\D W\left(x,\frac{\,\D \phi}{\,\D x},\phi\right)=\frac{\bar{W}\left(\frac{\,\D \phi}{\,\D x},\phi\right)-W\left(x,\frac{\,\D \phi}{\,\D x},\phi\right)}{H}\,\D x+\frac{1}{\sqrt{H}}\Delta_W\left(\frac{\,\D \phi}{\,\D x},\phi\right)\,\D B_x,\label{limit pn}
   \end{equation}
where
\begin{equation}
\Delta_W^2\left(\frac{\,\D \phi}{\,\D x},\phi\right):=\sum_{j= -N_s-1}^{N_s+1}\operatorname{Cov}\Big(W_j({\textstyle \frac{\,\D \phi}{\,\D x}},\phi),W_0({\textstyle \frac{\,\D \phi}{\,\D x}},\phi)\Big),
\end{equation}
\begin{flalign}W_j\left(\frac{\,\D \phi}{\,\D x},\phi\right):=W\left(\frac{\,\D \phi}{\,\D x},\phi,\omega_j^+,\omega_{j+1}^+,\omega_{j-1}^-,\omega_{j}^-\right)
:=\frac{1}{8}(\beta^+_j+\beta^-_{j-1})\left(\frac{\,\D \phi}{\,\D x}\right)^2+\gamma_j(\phi),
\label{energy density}
   \end{flalign}
   and
   \begin{equation}
 \bar{W}\left(\frac{\,\D \phi}{\,\D x},\phi\right)  =\rmE W_j\left(\frac{\,\D \phi}{\,\D x},\phi\right) .
   \end{equation}
The solution of Eq.~\eqref{limit pn}, to the leading order as that in Eq.~\eqref{solution misfit}, is
  	 \begin{equation}
  		W\left(x,\frac{\,\D \phi}{\,\D x},\phi\right)=\bar{W}\left(\frac{\,\D \phi}{\,\D x},\phi\right)
  +\Delta_W\left(\frac{\,\D \phi}{\,\D x},\phi\right)Y_x,\label{solution pn}
  	\end{equation}
where $Y_x$ is the OU process given in Eq.~\eqref{ou_solution}.

Based on  Eqs.~\eqref{solution}, \eqref{solution misfit} and \eqref{solution pn}, the stochastic total energy density can be written as
\begin{flalign}
	W\left(x,\frac{\,\D \phi}{\,\D x},\phi\right)=&\frac{1}{4}\bar{\beta}\left(\frac{\,\D \phi}{\,\D x}\right)^2+\bar{\gamma}(\phi)+\Delta_W\left(\frac{\,\D \phi}{\,\D x},\phi\right)Y_x,\label{solution pn1}
\end{flalign}
with OU process $Y_x$ given in Eq.~\eqref{ou_solution}, and
\begin{flalign}
\Delta_W\left(\frac{\,\D \phi}{\,\D x},\phi\right)=&\frac{1}{4}{\Delta_\text{e}}\left(\frac{\,\D \phi}{\,\D x}\right)^2+\Delta_\gamma(\phi)+\Delta_C\left(\frac{\,\D \phi}{\,\D x},\phi\right).\label{solution pn2}
\end{flalign}
Here in Eq.~\eqref{solution pn1}, the first two terms  give the average total energy density (corresponding to that in the classical Peierls-Nabarro model), and the integral term gives  the stochastic effects in the total energy, which include contributions from the elastic energy (the $\Delta_\text{e}$ term), the misfit energy (the $\Delta_\gamma$ term) and their correlation (the $\Delta_C$ term) as given in Eq.~\eqref{solution pn2}. Note that the correlation between the elastic energy density and the $\gamma$-surface is characterized by
$\Delta_C\left(\frac{\,\D \phi}{\,\D x},\phi\right)=\Delta_W\left(\frac{\,\D \phi}{\,\D x},\phi\right)-\frac{1}{4}{\Delta_\text{e}}\left(\frac{\,\D \phi}{\,\D x}\right)^2-\Delta_\gamma(\phi)$.

If there is no short-range order, the length of range of short-range order $H\to 0$. In this case,
Eq.~\eqref{indpendent elas}  holds for the elastic energy, and from Eqs.~\eqref{limit equation_gamma}  and \eqref{limit pn} with $H\to 0$,
 we have
 \begin{flalign}
\gamma(x,\phi)\,\D x=&\bar{\gamma}(\phi)\,\D x+\Delta_\gamma(\phi)\sqrt{\bar{h}}\,\D B_x,\\
W\left(x,\frac{\,\D \phi}{\,\D x},\phi\right)\,\D x=&\left[\frac{1}{4}\bar{\beta}\left(\frac{\,\D \phi}{\,\D x}\right)^2+\bar{\gamma}(\phi) \right]\,\D x+\Delta_W\left(\frac{\,\D \phi}{\,\D x},\phi\right)\sqrt{\bar{h}}\,\D B_x.
\end{flalign}
 Recall that $\bar{h}$ is the average lattice constant in the HEA which is the smallest distance between atomic sites.
   This agrees with the energy formulation derived in Ref.~\cite{jiang2020stochastic}  under the assumption of independent randomness, i.e., without short-range order.

  \subsection{Equation of Stochastic Peierls--Nabarro Model}\label{equation pn}

  We have obtained a stochastic energy whose density is given in Eq.~\eqref{solution pn1}.
 This is different from the stochastic PDEs studied in the literature in which a stochastic term in the form of white noise is directly added in a deterministic PDE (e.g. \cite{walsh1986introduction,KS91,allen1998finite,du2002numerical,E2010,Lord2014,Feng2017,cao2018finite}).  We briefly discuss the variational formulation of the obtained stochastic energy in this section.
For simplicity, we start from the case without short-range order and the stochastic energy depends only on $\phi$ (i.e., coming only from the misfit energy as discussed in Ref.~\cite{zhang2019effect}).
In this case, the total energy of the bilayer HEA is:
\begin{equation}
  	E_\text{PN}[\phi]=\int_{\sR}\left[\frac{1}{4}\bar{\beta}\left(\frac{\,\D \phi}{\,\D x}\right)^2+\bar{\gamma}(\phi)\right]\,\D x+\int_{\sR}\sigma\left(\phi\right)\,\D B_x.\label{energy}
  \end{equation}
For an equilibrium state of this stochastic energy,  using  the Euler-Lagrange equation formulation $\frac{\delta E_\text{PN}}{\delta \phi}=0$ formally, we have
\begin{equation}
 -\frac{1}{2}\beta\left(\frac{\,\D^2 \phi}{\,\D x^2}\right)+\gamma'(\phi)+\sigma'(\phi)\dot{W}=0, \label{s-equation}
  \end{equation}
  where $\dot{W}$ is the derivative of Brown motion in space (i.e., the Gaussian white noise in space). This means that for any Brownian motion path,  the energy $E_\text{PN}[\phi]$ in \eqref{energy} is in equilibrium.
  Similarly, for the dynamics problem,  the stochastic energy in \eqref{energy} formally leads to the following gradient flow equation $\frac{\D\phi}{\D t}=-M \frac{\delta E_\text{PN}}{\delta \phi}$:
\begin{equation}
 \frac{\D\phi}{\D t}=M\left(	\frac{1}{2}\beta\left(\frac{\,\D^2 \phi}{\,\D x^2}\right)-\gamma'(\phi)-\sigma'(\phi)\dot{W}\right), \label{s-gradient flow}
  \end{equation}
where $M>0$ is the mobility.

When the short-range order is considered, in the case where stochastic energy depends only on $\phi$ (i.e., coming only from the misfit energy as discussed in Ref.~\cite{zhang2019effect}) as discussed above, the total energy of the bilayer HEA is:
\begin{equation}
  	E_\text{PN}[\phi]=\int_{\sR}\left[\frac{1}{4}\bar{\beta}\left(\frac{\,\D \phi}{\,\D x}\right)^2+\bar{\gamma}(\phi)+\sigma\left(\phi\right)Y_x\right]\D x,\label{energy_sro}
  \end{equation}
where $Y_x$ is the OU process defined in Eq.~\eqref{ou_solution}. Equilibrium of this stochastic energy is described by
\begin{equation}
  \frac{\delta E_\text{PN}}{\delta \phi}=	-\frac{1}{2}\beta\left(\frac{\,\D^2 \phi}{\,\D x^2}\right)+\gamma'(\phi)+\sigma'(\phi)Y_x=0, \label{s-equation_sro}
  \end{equation}
  and the gradient flow associated with it is
\begin{equation}
 \frac{\D\phi}{\D t}=M\left(	\frac{1}{2}\beta\left(\frac{\,\D^2 \phi}{\,\D x^2}\right)-\gamma'(\phi)-\sigma'(\phi)Y_x\right), \label{s-gradient flow_sro}
  \end{equation}
where $M>0$ is the mobility. We would like to remark that existence and uniqueness of the (mild) solution of the stochastic Peierls-Nabarro model in Eq.~\eqref{s-equation} or \eqref{s-equation_sro}   can be proved similarly to the results in Ref.~\cite{cao2018finite}.

 Rigorous definitions of the solutions of these stochastic equations and analysis of their properties will be explored in the future work.


%

\section{Summary}
  We have derived stochastic continuum models from atomistic models for HEAs incorporating the atomic level randomness and short-range order, for both the elasticity in HEAs without defects and HEAs with dislocations. The stochastic continuum model for dislocations in HEAs is under the framework of Peierls-Nabarro-type  models which are able to include the dislocation core effect. The obtained stochastic continuum descriptions for the  atomic level randomness with short-range order are in the form of  OU processes, which validates the continuum model adopted phenomenologically in the stochastic Peierls-Nabarro model for dislocations in HEAs proposed in \cite{zhang2019effect}.

  A critical quantity in the continuum limit from the atomistic model is the characteristic length $H$ of the short-range order on the atomistic level, and this characteristic length is kept in the continuum limit process. When $H$  goes to $0$, the stochastic continuum models obtained in this paper  recover the continuum models for HEAs without short-range order proposed and analyzed previously \cite{zhang2019effect,jiang2020stochastic}.
Moreover, in the continuum limit from the atomistic model, we keep both the atomic level mean and variance when averaging is performed.


The obtained stochastic continuum models are based on the energy formulation. We also briefly discuss the variational formulation, i.e., the associated stochastic equations, of these obtained stochastic energies.


The stochastic continuum models for elasticity and dislocations in HEAs can be generalized to the settings of two or three dimensions \cite{Xu2000,Shen-Wang2004,xiang2008generalized,wei2008generalized}, which will be explored in the future work.  Analysis of  the obtained stochastic equations and their numerical solutions will also be considered in the future work.

  \section*{Acknowledgement}
  This work was supported by the Hong Kong Research Grants Council General Research Fund 16307319, and the Project of Hetao Shenzhen-HKUST Innovation Cooperation Zone HZQB-KCZYB-2020083.

  \bibliographystyle{unsrtnat}
  \bibliography{references}

\end{document}